\tikzstyle{vertex}=[circle, draw, inner sep=0pt, minimum size=6pt]
\newcommand{\notforthesis}[1]{}
	\newtheorem{definition}{Definition}
	\newtheorem{claim}{Claim}
	\newtheorem{lemma}{Lemma}
         \newtheorem{example}{Example}
	 \newtheorem{assumption}{Assumption}
	 \newtheorem{proof of proposition}{Proof of Proposition}
	\newtheorem{proof of claim}{Proof of Claim}
	\newtheorem{proposition}{Proposition} 
	\newtheorem{theorem}{Theorem}
	\newtheorem{notation}{Notation}
        \newtheorem{corollary}{Corollary}
        \newtheorem{remark}{Remark}
\newcommand{\diag}{\mbox{diag}\,}
\renewcommand{\r}{{\mathbb R}}
\renewcommand{\a}{{\alpha}}
\newcommand{\e}{{\epsilon}}
\newcommand{\D}{{\color{black}D}}
\newcommand{\M}{\mu}
\newcommand{\ee}{\end{equation}}
\newcommand{\bal}{\begin{aligned}}
\newcommand{\eal}{\end{aligned}}
\newcommand{\bi}{\begin{itemize}}
\newcommand{\ei}{\end{itemize}}
\newcommand{\ben}{\begin{enumerate}}
\newcommand{\een}{\end{enumerate}}
\newcommand{\beqn}{\begin{eqnarray*}}
\newcommand{\eeqn}{\end{eqnarray*}}
\newcommand{\be}[1]{\begin{equation}\label{#1}}
\newcommand{\bp}{\begin{proof}}
\newcommand{\ep}{\end{proof}}
\newcommand{\bremark}{\begin{remark}\rm } 
\newcommand{\eremark}{\end{remark}}
\newcommand{\blem}{\begin{lemma}}
\newcommand{\elem}{\end{lemma}}
\newcommand{\bclaim}{\begin{claim}}
\newcommand{\eclaim}{\end{claim}}
\newcommand{\bass}{\begin{assumption}}
\newcommand{\eass}{\end{assumption}}
\newcommand{\bnote}{\begin{notation}}
\newcommand{\enote}{\end{notation}}
\newcommand{\bthm}{\begin{theorem}}
\newcommand{\ethm}{\end{theorem}}
\newcommand{\bprop}{\begin{proposition}}
\newcommand{\eprop}{\end{proposition}}
\newcommand{\bcor}{\begin{corollary}}
\newcommand{\ecor}{\end{corollary}}
\newcommand{\dis}{\displaystyle}
\newcommand{\lt}{\left}
\newcommand{\rt}{\right}
\title{Cluster synchronization of diffusively-coupled \\nonlinear systems: A contraction based approach}
\author{
Zahra Aminzare%
\thanks{The Program in Applied and Computational Mathematics, Princeton University, NJ 08544, USA. Email: aminzare@math.princeton.edu (Z. Aminzare).}
\and Biswadip Dey$^{\dag}$%
\and Elizabeth N. Davison$^{\dag}$% 
\and Naomi Ehrich Leonard%
\thanks{Department of Mechanical and Aerospace Engineering, Princeton University, NJ 08540, USA. Emails: biswadip@princeton.edu (B. Dey), end@princeton.edu (E. N. Davison), naomi@princeton.edu (N. Ehrich Leonard).}
}
\date{}                                           
\begin{document}
\maketitle

%--------------------------------------------abstract--------------------------------------------
\begin{abstract}
Finding the conditions that foster synchronization in networked oscillatory systems is
%synchronized activity are 
critical to understanding a wide range of biological and mechanical systems. However, the conditions proved in the literature for synchronization in nonlinear systems with linear coupling, such as has been used to model neuronal networks, are in general not strict enough to accurately determine the system behavior. We leverage contraction theory to derive new sufficient conditions for cluster synchronization in terms of the network structure, for a network where the intrinsic nonlinear dynamics of each node may differ. Our result requires that network connections satisfy a cluster-input-equivalence condition, and we explore the influence of this requirement on network dynamics. For application to networks of nodes with neuronal spiking dynamics, we show that our new sufficient condition is tighter than those found in previous analyses which used nonsmooth Lyapunov functions. Improving the analytical conditions for when cluster synchronization will occur based on network configuration is a significant step toward facilitating understanding and control of complex oscillatory systems.
\end{abstract}
%
%
%-------------------------------------------------------------------------------
%-------------------------------------------------------------------------------
\section{Introduction}
%-------------------------------------------------------------------------------
%

Synchronization has been observed and studied in diverse fields. Its presence has been characterized in symmetric networks of identical mechanical systems or identical biological systems, as well as those with differing types of individual components and nonuniform coupling \cite{Sync_Book_NL, strogatz2003sync}. The role of synchronization has been studied in a multitude of both natural and engineered settings including collective motion \cite{Synchro_RS_DP_NEL}, power-grid networks \cite{Sync_Power_Grid_Nat_Phys}, robotics \cite{Sync_Mesh_System_Networks}, sensor networks \cite{Time_Sync_Sensor_Net_Survey}, circadian rhythms \cite{Sync_Circadian}, bioluminescence in fireflies \cite{Sync_Firefly_Firing_Smith1935}, pacemaker cells in the heart \cite{Sync_Pacemaker_Cells_Strogatz}, neuronal ensembles \cite{Sync_Neuronal_Networks}, and numerous others. In the human brain, synchronization at the neuronal or regional level can be beneficial, allowing for production of a vast range of behaviors \cite{Sync_Brain_Social_Interact, Sync_OdorEncoding_Neural_System}, or detrimental, causing disorders such as Parkinson's disease \cite{lehnertz2009synchronization} and epilepsy \cite{chen2007excessive}. Applications for control of neural dynamics may involve regulating patterns of synchronized phenomena among nodes or subsystems that have different intrinsic dynamics and are connected in an arbitrary network \cite{abrams2016introduction, wilson2015clustered}. Most generally, nodes can be agents in a multi-agent system, compartments in a compartmental system, or other units that interact with one another in a pairwise framework. Characterizing the emergence and persistence of synchronization in a system with multiple heterogeneous nodes is the first step towards effective control of desired behavior.

In realistic networks that feature heterogeneous nodes and nonuniform coupling structure, complex patterns of synchronization emerge. Under certain conditions, it is possible to partition the network into clusters of nodes that are synchronized within clusters but not across clusters. This is called cluster synchronization \cite{Cluster_synchronization_Belykh_2008, sorrentino_network_2007}. In the field of pattern formation, the formation of clusters has been investigated extensively \cite{Cluster_Sync_Patterning_4, Cluster_Sync_Patterning_3}. The conditions for synchronized clusters can be approached analytically by generalizing approaches from the literature on synchronization \cite{Cluster_Sync_NonIdentical_DS, Cluster_Sync_Sorrentino, Cluster_Sync_Community, Cluster_Sync_MingCao}. Here, we leverage contraction theory to provide new sufficient conditions for synchronization of clusters in a network with heterogeneous oscillators.

Contraction theory is a powerful tool for understanding synchronization phenomena in networked systems. The proper tool for characterizing contractivity for nonlinear systems is provided by the logarithmic norms, or matrix measures \cite{michelbook,%
Desoer}, of the Jacobian of the vector field, evaluated at all possible states. This idea is a classical one, and can be traced back at least to work of D.C.~Lewis in the 1940s, \cite{Lewis1949,%
Hartman1961}. Dahlquist's 1958 thesis under H\"ormander used matrix measures to show contractivity of differential equations, and more generally of differential inequalities, the latter applied to the analysis of convergence of numerical schemes for solving differential equations \cite{dahlquist}. Several authors have independently rediscovered the basic ideas. For example, in the 1960s, Demidovi{\v{c}}~\cite{Demidovich1961,%
Demidovich1967} established basic convergence results with respect to Euclidean norms, as did Yoshizawa~\cite{Yoshizawa1966,%
Yoshizawa1975}. In control theory, the field attracted much attention after the work of Lohmiller and Slotine~\cite{lohmiller1998contraction}.
% , and especially a string of 
% follow-up papers by Slotine and collaborators, see for example
% \cite{Loh_Slo_00,%
% slotine_wang,%
% slotine_stochastic,%
% russo_slotine}.
% These papers showed the power of contraction techniques for the study of not
% merely stability, but also observer problems, nonlinear regulation, and
% synchronization and consensus problems in complex networks.
% See also the work by Nijmejer and coworkers~\cite{pavlov_book}.
We refer the reader especially to the careful historical analysis given in \cite{Jouffroy}. Other useful historical references are~\cite{Pav_Pog_Wou_Nij} and the survey~\cite{Soderlind}.
{%
An introductory tutorial to basic results in contraction theory for nonlinear control systems is given in~\cite{contraction_survey_CDC14}.
Results on synchronization using contraction-based techniques are described, for example, in \cite{Slotine_cluster_synch, 
arcak2011,%
slotine,%
Russo_Bernardo,%
slotine_wang,%
aminzare_shafi_arcak_sontag_bookchapter2013}.

% Contraction theory is a powerful tool for understanding synchronization phenomena in networked systems. Contractivity requires that any two trajectories of a system will converge to one another and its properties depend on the norm being used. Studies of synchronization using contraction theory often employ measures derived from $L^2$ norms \cite{}; these $L^2$ norms give Lyapunov functions that can be used to prove stability of the synchronized state \cite{lohmiller1998contraction}. Contraction theory has been generalized to  $L^p$ norms, which are often more appropriate than the $L^2$ norm in application \cite{Aminzare_thesis}. Here, we extend contraction theory to study cluster synchronization in a system of nodes with intrinsic dynamics that are homogeneous within a given cluster but may be heterogeneous between clusters. The cluster-input-equivalence condition denotes a constraint that, for any pair of nodes in a given cluster, the sums of edge weights from each of the other clusters must be equal. Using this framework, we provide a sufficient condition for cluster synchronization.

The main contributions of the present paper are as follows. We extend contraction theory to a setting where oscillators may have heterogeneous intrinsic dynamics and the network satisfies the cluster-input-equivalence condition. Using this extension of contraction theory, we prove new sufficient conditions for cluster synchronization in a network of heterogeneous oscillators. Furthermore, we improve upon our earlier analysis of synchronization in networks of homogeneous Fitzhugh-Nagumo (FN) oscillators \cite{davison_sync_2016}, and show that the proposed result yields a tighter bound on the algebraic connectivity of the associated undirected graph.

The paper proceeds as follows. In Section~\ref{contraction_review}, we review relevant concepts and results from the contraction theory literature. We present our main result, an extension of the existing theory to a cluster synchronized setting, in Section~\ref{cluster_synchronization}. In Section~\ref{Network_Reduction}, we demonstrate how we can use cluster synchronization to reduce a large network of nodes with heterogeneous intrinsic dynamics into a smaller network of their synchronized states. Finally, in Section~\ref{Application}, we consider a network of FN oscillators and demonstrate how the contraction based approach provides improvement over existing results on sufficient conditions for synchronization and cluster synchronization.
%
%
%

%
%
%
%-------------------------------------------------------------------------------
%-------------------------------------------------------------------------------
\section{Contraction Theory: Review}
\label{contraction_review}
%-------------------------------------------------------------------------------
%
In what follows, we review notations, definitions, and main results in contraction theory that will be applied in later sections.

\begin{definition}[Logarithmic norm \cite{Soderlind}]
For any matrix $A\in\r^{n\times n}$ and any given norm $\|\cdot\|$ on $\r^n$, the logarithmic norm (also called the matrix measure) of $A$ induced by the norm $\|\cdot\|$ is defined by
\be{logarithmic_norm}
\M [A] = \displaystyle\lim_{h\to0^+}\sup_{x\neq 0 \in \r^n}\frac{1}{h}\left(\frac{\|(I+hA)x\|}{\|x\|}-1\right), 
\end{equation}
where $I$ is the identity matrix of size $n$. 
\end{definition}
\bnote
For any $1\leq p\leq\infty$ and any $n\times n$ positive definite matrix $Q$, let $\|\cdot\|_p$ denote the $L^p$ norm on $\r^n$, and 
$\|\cdot\|_{p,Q}$ denote the $Q-$weighted $L^p$ norm on $\r^n$ defined by $\|x\|_{p,Q}:=\|Qx\|_p$. 
By $\M_{p}[A]$, we mean the logarithmic norm of $A$ induced by $\|\cdot\|_p$ and 
by $\M_{p,Q}[A]$, we mean the logarithmic norm of $A$ induced by $\|\cdot\|_{p,Q}$. Note that $\M_{p,Q}[A] = \M_p[QAQ^{-1}].$
\enote

\bremark\label{explicit_value_LN}
 In Table \ref{tab-mu}, the algebraic expression of logarithmic norms induced by the $L^p$ norm for $p=1,2,$ and $\infty$ are shown. For proofs, see for instance \cite{Desoer}.
\newcommand{\spec}{\mbox{spec}}
\begin{table}[ht]
{\scriptsize{
\caption{\scshape Standard matrix measures for a real $n\times n$ matrix, $A=[a_{ij}]$.}
\begin{center}
\begin{tabular}{|c|c|}
\hline
vector norm, $\|\cdot\|$ & induced matrix measure, $M[A]$\\
\hline
$\|x\|_1=\displaystyle\sum_{i=1}^{n}\abs{x_i}$ & $\M_1[A]=\displaystyle\max_{j}\left(a_{jj}+\displaystyle\sum_{i\neq j}\abs{a_{ij}}\right)$\\
\hline
$\|x\|_2=\left(\displaystyle\sum_{i=1}^{n}\abs{x_i}^2\right)^{\frac{1}{2}}$ & $\M_2[A]=\displaystyle\max_{\lambda\in\spec{\frac{1}{2}(A+A^T)}}
\lambda$\\
\hline
$\|x\|_{\infty}=\displaystyle\max_{1\leq i\leq n} \abs{x_i}$ & $\M_{\infty}[A]=\displaystyle\max_{i}\left(a_{ii}+\displaystyle\sum_{i\neq j}\abs{a_{ij}}\right)$\\
\hline
\end{tabular}
\end{center}
\label{tab-mu}
}}
\end{table}%
\eremark

\begin{definition}[Contraction]
Consider the following nonlinear dynamical system on $V\times[0,\infty]$, where $V$ is a convex subset of $\r^n$. Consider appropriate conditions on vector field $G$ (e.g. $G(x,t)$ Lipschitz on $x$ and continuous on $(x,t)$) which guarantee existence and uniqueness of solutions of
\be{isolated_ODE}
\dot x(t) = G(x(t),t). 
\ee
Equation~(\ref{isolated_ODE}) is {\em contractive} if there exist $c<0$ and a norm $\|\cdot\|$ on $\r^n$ such that, for any two solutions $x$ and $y$ of Equation~(\ref{isolated_ODE}), the following inequality holds for any $t\geq0$:
\be{contractivity_inequality}
\|x(t)-y(t)\| \leq e^{ct} \|x(0) - y(0)\|. 
\ee
\end{definition}
\bprop[Theorem 1, \cite{contraction_survey_CDC14}]
\label{general_contraction}
Consider Equation~(\ref{isolated_ODE}) and assume that $G$ is a continuously differentiable function on its first variable. Let $c:= \sup_{(x,t)}\mu [J_G(x,t)]$, where $\mu$ is the logarithmic norm induced by an arbitrary norm on $\r^n$, and $J_G$ is the Jacobian of $G$. Then for any two solutions $x$ and $y$ of Equation~(\ref{isolated_ODE}),
\[
\|{x(t) - y(t)}\| \leq e^{ct} \|{x(0) - y(0)}\|.
\]
In particular, when $c<0$, Equation~(\ref{isolated_ODE}) satisfies Equation~(\ref{contractivity_inequality}) and is contractive. 
\eprop

Throughout the paper, we denote the Jacobian of the vector field $f(x,t)$ evaluated at $(x,t)$ as $J_{f}(x,t)$, i.e., $J_{f}(x,t)=\frac{\partial f}{\partial x}(x,t)$.

We consider a network of $N$ nodes, with states $\{X^1, \ldots, X^N\}$ and intrinsic dynamics $F^i$:
\beqn
\dot X^i(t) = F^i\lt(X^i(t),t\rt)\;.
\eeqn 
Here, $X^i$ and $F^i$ have dimension $n\geq1$. For a fixed convex subset $V\subset\r^n$, $F^i \colon V \times [0,\infty) \to\r^{n}$, defined by $F^i=F^i(z,t)$, is Lipschitz on $z$ and continuous on $(z,t)$. We also assume that the nodes are diffusively connected through an undirected weighted graph $\mathcal{G} = (\mathcal{V}, \mathcal{E})$ and describe the dynamics of the network as follows:
\begin{align}\label{general_network}
\dot{X}^i(t) &= F^i\lt(X^i(t),t\rt)+\sum_{j\in \mathcal{N}^i} \gamma^{ij} \D \lt(X^j(t)-X^i(t)\rt)\qquad i=1,\ldots,N\;.
\end{align}
The indices in $\mathcal{N}^i$ represent the neighbors of node $i$. The \textit{diffusion matrix} $\D$ is a nonzero diagonal matrix of size $n$, $\D=\diag(d_1,\ldots, d_n),$ where $d_i\geq0$. The positive constants $\gamma^{ij}$ represent the edge weights of $\mathcal{G}$. The products of the elements in $\D$ and the edge weights $\gamma^{ij}$ represent the coupling strengths between the nodes.

Let $\mathcal{L} = (\mathcal{L}_{ij})$ be the Laplacian matrix of $\mathcal{G}$:
\be{Laplacian_mathcal_L}
\mathcal{L}_{ij} = \left\{\begin{array}{ccc}
\sum_{k\in \mathcal{N}^i } \gamma^{ik} &  & i=j ,\\
-\gamma^{ij} &  & i\neq j, j\in \mathcal{N}^i,  \\
0 &  & \mbox{otherwise}.
\end{array}\right.
\ee
We denote the eigenvalues of $\mathcal{L}$ as $0=\lambda^{(1)} \leq \lambda^{(2)} \leq \cdots \leq \lambda^{(N)}$. The second smallest eigenvalue, $\lambda^{(2)}$, is called the {\em algebraic connectivity} of the graph. This number helps to quantify ``how connected" the graph is. The number of the zero eigenvalues is equal to the number of connected components of $\mathcal G$.

Using the notation of the Laplacian matrix, Equation~(\ref{general_network}) can be written in the following closed form:
\begin{align}\label{general_network_closed_form1}
\dot{X} (t)&= \mathcal F(X(t), t) - (\mathcal L\otimes \D) X(t), 
\end{align}
where $X=\lt({X^1}^T,\ldots, {X^N}^T\rt)^T$, $\mathcal F=\lt({F^1}^T,\ldots, {F^N}^T\rt)^T$, and $\otimes$ represents the Kronecker product. 

\begin{definition}[Complete synchronization]\label{def_sync_manifold} 
Let 
\[
\mathscr{S}_1:= \left\{X^1=\cdots=X^N, \quad X^i\in\mathbb{R}^n \right\}. 
\]
The dynamics given in Equation~(\ref{general_network}) {\em synchronize completely} if any solution of Equation~(\ref{general_network}) converges to $\mathscr{S}_1$ in an appropriate norm. 
In other words, let $X$ be a solution of Equation~(\ref{general_network}). Then there exists a solution $\bar X \in \mathscr{S}_1$ such that, in an appropriate norm,
\[
X(t)-\bar X(t)\to0  \quad \mbox{as $t\to\infty.$}
\]

% i.e., for any pair $i, j \in \{1,\ldots,N\}$, $X^j(t) - X^i(t) \to 0$, as $t\to\infty$.
%Let $X=\lt({X^1}^T,\ldots,{X^N}^T\rt)^T$ be a solution of Equation~(\ref{general_network}). Then there exists a solution $\bar X= \lt({x}^T,\ldots,{x}^T\rt)^T$ on $\mathscr{S}_1$ such that

$\mathscr{S}_1$ is called the \em{synchronization manifold}.
\end{definition}
We will use synchronization and complete synchronization alternatively. 

\begin{definition}[Cluster synchronization]\label{Cluster_Synchronization}
For any $1\leq K\leq N$ and any $1\leq c_1, \ldots, c_K \leq N$ such that $c_1+\cdots+c_K=N$, let 
\[
\mathscr{S}_K:= \left\{X^1=\cdots=X^{c_1}, \; \ldots,\;    X^{N- c_K+1}=\cdots=X^{N}, \quad X^i\in\mathbb{R}^n\right\}. 
\]
The dynamics given in Equation~(\ref{general_network}) {\em synchronize in clusters} if there exists $1\leq K\leq N$ such that   any solution of Equation~(\ref{general_network}) converges to $\mathscr{S}_K$ in an appropriate norm. 

$\mathscr{S}_K$ is called the \em{$K-$cluster synchronization manifold}.
\end{definition}
Note that, the 1-cluster synchronization manifold is same as the synchronization manifold (Definition~\ref{def_sync_manifold}).

In the following two propositions, we consider Equation~(\ref{general_network}) with homogeneous $F^i=F$, 
and state two sufficient conditions that guarantee that Equation~(\ref{general_network}) synchronizes. 
%
%\bprop[Proposition 4, \cite{contraction_survey_CDC14}]
%\label{contractivity_to_synchronization}
%Assume that  Equation~(\ref{general_network}) is contractive, i.e., it satisfies Equation~(\ref{contractivity_inequality}). Then Equation~(\ref{general_network}) synchronizes. 
%\eprop
%%

%
\bprop[Proposition 1, \cite{synchronization2014_journal}]
\label{cor_condition_contraction}
Consider Equation~(\ref{general_network}) with homogeneous $F^i=F$. Assume that there exists a norm on $\r^n$ such that 
\be{condition_contraction}
 \sup_{(x,t)} \M[J_F(x,t)]<0. 
\ee 
Then Equation~(\ref{general_network}) synchronizes. 
\eprop
In \cite{Slotine_cluster_synch}, Proposition \ref{cor_condition_contraction} has been generalized\footnote{The statement of Theorem 3 in \cite{Slotine_cluster_synch} is correct; however, the proof needs revision to be complete.} to $F^i$ with heterogeneous elements. The work shows that, under some conditions on the weights of the interconnected graph, if each node has contractive dynamics, then Equation~(\ref{general_network}) synchronizes in clusters. In Section \ref{Application}, we provide an example that synchronizes in clusters and supports our theory derived in the next section but does not satisfy the condition provided in \cite{Slotine_cluster_synch}.

Note that the sufficient condition provided in Proposition~\ref{cor_condition_contraction} depends only on the dynamics of each isolated node, namely $J_F$. 
The next proposition provides a sufficient condition for synchronization weaker than Equation~(\ref{condition_contraction}) that depends on $J_F$, the diffusion matrix $D$, and the graph $\mathcal{G}$. 
The following results are based on weighted $L^2$ norms but, for some special graphs, they have been generalized to weighted $L^p$ norms \cite{synchronization2014_journal}.
\begin{proposition}[Theorem 4 (modified), \cite{arcak2011}]
\label{synchronization_Arcak}
Consider Equation~(\ref{general_network}) with homogeneous $F^i=F$. Assume that there exists a positive definite matrix $P$ such that $P^2D+DP^2$ is also positive definite, and let
\[
c:= \sup_{(x,t)\in V\times [0,\infty)} \M_{2,P}\lt[J_{F}(x,t)-\lambda^{(2)} D\rt].
\]
Then for any solution $X$ of Equation~(\ref{general_network}) that remains in $V^N$, there exists a solution $\bar X$ such that
\[
\| X(t) - \bar X(t) \|_{2,P} \leq e^{ct} \| X(0) - \bar X(0) \|_{2,P}. 
\]
Moreover, if $c<0$, then Equation~(\ref{general_network}) synchronizes, i.e., for any pair $i, j \in \{1,\ldots, N\}$, 
\[
X^i(t) - X^j(t) \to 0 \quad \mbox{as $t\to\infty$}. 
\]
\eprop
%

% Note that, for a positive diagonal $P$, the matrix $P^2D+DP^2$ is guaranteed to be positive definite. 

In the following section, we present the main result of this work -- we generalize Proposition~\ref{synchronization_Arcak} to heterogeneous $F^i$ and provide sufficient conditions for cluster synchronization.

%-----------------------------------------------------------------------------------
\section{Main Result: Cluster Synchronization}
\label{cluster_synchronization}
%-----------------------------------------------------------------------------------

In this section, we provide sufficient conditions on heterogeneous intrinsic dynamics $F^i$, 
the graph $\mathcal{G}$, and the diffusion matrix $\D$, that guarantee cluster synchronization of the network described in Equation~(\ref{general_network}). 

\bass\label{Assumption}
In the network described by Equation~(\ref{general_network}), we assume that
\begin{enumerate}
\item There exist $K\leq N$ and $c_1, \ldots, c_K\geq 2$, such that $c_1+\cdots+c_K=N$, and 
\[F^{i_1}=\cdots=F^{i_{c_1}} =: F_{ \mathscr{C}_1}, \; \ldots,\;    F^{i_{N- c_K+1}}=\cdots=F^{i_N} =: F_{ \mathscr{C}_K}, \]
where $\{i_1,\ldots, i_N\}$ is a permutation of $\{1,\ldots, N\}$. 
Without loss of generality, we can assume: 
\[F^{1}=\cdots=F^{{c_1}} =: F_{ \mathscr{C}_1}, \; \ldots,\;    F^{{N- c_K+1}}=\cdots=F^{N} =: F_{ \mathscr{C}_K}.\]
Let $\mathscr{C}_1, \ldots, \mathscr{C}_K$ denote $K$ clusters of nodes. The nodes in cluster $\mathscr{C}_1$ are defined by  $X^1,\ldots, X^{c_1}$ and they all have dynamics $F_{ \mathscr{C}_1}$, the nodes in cluster $\mathscr{C}_2$ are defined by $X^{c_1+1}, \ldots, X^{c_1+c_2}$ and they all
have dynamics $F_{ \mathscr{C}_2}$, etc. For ease of notation in our calculations, we let 
 \be{}
\bal
X^1_{ \mathscr{C}_1} = X^1,&\ldots, X^{c_1}_{ \mathscr{C}_1}= X^{c_1},\\
X^1_{ \mathscr{C}_2}=X^{c_1+1} ,&\ldots, X^{c_2}_{ \mathscr{C}_2}= X^{c_2},\\
&\;\; \vdots\\ 
X^1_{ \mathscr{C}_K} =X^{N-c_K+1},&\ldots, X^{c_K}_{ \mathscr{C}_K}= X^{N}.
 \eal
 \ee
%Let $\mathscr{C}_1, \ldots, \mathscr{C}_K$ denote $K$ clusters of nodes with homogeneous intrinsic dynamics.

\item The {\em cluster-input-equivalence} condition defined in \cite{Cluster_synchronization_Belykh_2008} holds. This implies that the following edge weight sums are equal:
 for any two nodes 
$X^i_{\mathscr{C}_r}, X^j_{\mathscr{C}_r}$, $(i,j)\in\mathscr{C}_r$, 
     \begin{align}\label{input_equi}
 \eta_{\mathscr{C}_r\mathscr{C}_s} \;:=\; 
 \sum_{k\in\mathscr{N}^{i}_{\mathscr{C}_s}} \gamma^{ik} = 
 \sum_{k\in\mathscr{N}^j_ {\mathscr{C}_s}}   \gamma^{jk}, 
    \end{align}
where $\mathscr{N}^i_{\mathscr{C}_s}$ denotes the indices of the neighbors of node $i$
 which are in cluster $\mathscr{C}_s$. 
\end{enumerate}
\eass

\blem Under Assumption \ref{Assumption}, the $K-$cluster synchronization manifold, defined in Definition \ref{Cluster_Synchronization}, is invariant. 
\elem
 
 \bp
This follows by the cluster-input-equivalence condition, Equation~(\ref{input_equi}).
\ep

Next we provide sufficient conditions to show that $\mathscr{S}_K$ is (globally) stable, 
i.e., any solution of Equation~(\ref{general_network}) converges to $\mathscr{S}_K$. 

Recall that the network graph is $\mathcal {G}= (\mathcal V, \mathcal E)$. Denote the subgraph 
for the nodes in $\mathscr{C}_r$ by 
$\mathcal G_{\mathscr{C}_r} = (\mathcal V_{\mathscr{C}_r}, \mathcal E_{\mathscr{C}_r})$. Then
 \[\mathcal G = \lt(\bigcup_{r=1}^K\; \mathcal G_{\mathscr{C}_r}\rt) \bigcup \bar{\mathcal G},\]
where $\bar{\mathcal G} = (\mathcal V, \mathcal E \setminus\cup_r\; \mathcal E_{\mathscr{C}_r})$ is 
the graph describing connections among the clusters ${\mathscr{C}_r}$. 

Let $\mathcal L_{\mathscr{C}_r}$ 
denote the Laplacian matrix of $\mathcal G_{\mathscr{C}_r}$ with eigenvalues 
$0= \lambda^{(1)}_{\mathscr{C}_r}\leq\lambda^{(2)}_{\mathscr{C}_r}\leq\ldots\leq\lambda^{(c_r)}_{\mathscr{C}_r}$ 
and $\bar{\mathcal L}$ 
denote the Laplacian matrix of $\bar{\mathcal G}$ with eigenvalues 
$0= \bar\lambda^{(1)}\leq\bar\lambda^{(2)}\leq\ldots\leq\bar\lambda^{(N)}$. 
In the special case of $K=1$, we set $\bar\lambda^{(2)}=0.$ 
Then $\mathcal{L}$, the Laplacian matrix of $\mathcal{G}$, can be written as follows:
\be{}
\mathcal{L} =\mathcal{L}_{\mathscr C} + \bar{\mathcal{L}}, 
\ee
where $\mathcal{L}_{\mathscr C}$ is a block diagonal matrix with the form:
\be{}
\mathcal{L}_{\mathscr C} = \left(\begin{array}{ccc}
\mathcal L_{\mathscr{C}_1} &  &  \\
 & \ddots &  \\ 
 &  & \mathcal L_{\mathscr{C}_K}\end{array}\right).
\ee
With these definitions, Equation~(\ref{general_network_closed_form1}) can be 
written as 
\begin{align}\label{general_network_closed_form2}
\dot{X} (t)&= \mathcal{F}(X(t), t) - (\mathcal L_{\mathscr C}\otimes \D) X(t) -  (\bar{\mathcal{L}}\otimes \D) X(t). 
\end{align}

%------------------------------------------------------------------- main theorem
\bthm\label{cluster_sychronization_main_result}
Consider Equation~(\ref{general_network}), or equivalently 
Equation~(\ref{general_network_closed_form2}), with Assumption \ref{Assumption}, and let 
\be{theorem_mu}
\mu := %\max_{\substack{r=1,\ldots, K\\ c_r\geq2}}
\max_{{r=1,\ldots,K}}\sup_{(x,t)\in V\times[0,\infty)}
 \M_{2,P} \lt[J_{F_{\mathscr{C}_r}}(x,t)-\lt(\lambda^{(2)}_{\mathscr{C}_r} + \bar\lambda^{(2)}\rt) \D\rt],
\ee
where $P\in\r^{n\times n}$ is a positive definite matrix chosen such that $P^2 \D+\D P^2$ is positive semidefinite. Then, for any solution $X$ of Equation~(\ref{general_network}) that remains in $V^N$, there exists $\bar X(t)$ such that 
\be{theorem_main_inequality}
\|X(t)- \bar X(t)\|_{2,\mathcal{P}} \leq e^{\mu t} \|X(0)- \bar X(0)\|_{2,\mathcal{P}},  
\ee
where $\mathcal{P} = I_N\otimes P^2$
and $\|\cdot\|_{2,\mathcal{P}}$ is a $\mathcal{P}$-weighted $L^2$ norm on $\r^{nN}$, defined by
\[
\|x\|_{2,\mathcal{P}}:=\left\|\left(\lt\|P^2x^1\rt\|_2, \ldots, \lt\|P^2x^N\rt\|_2\right)^T\right\|_2,
\]
for any $x=\lt({x^1}^T,\ldots, {x^N}^T\rt)^T\in\r^{nN}$. 
In particular, if $\mu<0$, then for any pair of nodes $i,j \in {\mathscr{C}_r}$, $X_{\mathscr{C}_r}^i$ and $X_{\mathscr{C}_r}^j$ satisfy
\[X_{\mathscr{C}_r}^i(t) - X_{\mathscr{C}_r}^j(t) \to 0 \quad \mbox{as $t\to\infty$}. \]
\ethm

\bremark
Theorem~\ref{cluster_sychronization_main_result} provides a sufficient  condition for cluster synchronization that depends on the dynamics of each isolated cluster $J_{F_{\mathscr{C}_r}}$, the diffusion matrix $D$, the structure  $\lambda^{(2)}_{\mathscr{C}_r}$ of each subgraph  $\mathcal{G}_{\mathscr{C}_r}$ describing connections among the nodes in cluster ${\mathscr{C}_r}$,
and the structure  $\bar\lambda^{(2)}$ of the subgraph  $\bar{\mathcal{G}}$ describing connections among the clusters. 
Proposition~\ref{synchronization_Arcak} is a special case of Theorem \ref{cluster_sychronization_main_result} when $K=1$ and $\bar\lambda^{(2)}=0$.
One can still apply Proposition~\ref{synchronization_Arcak} to $K>1$ clusters to show cluster synchronization. However, Theorem \ref{cluster_sychronization_main_result} provides a weaker sufficient condition for cluster synchronization. 
\eremark

% \bremark
% If there exists a cluster with a singleton, $c_s=1$, then $\lambda^{(2)}_{\mathscr{C}_s}$ does not exist.  Therefore, Equation~(\ref{theorem_mu}) is valid only for the clusters with more than one node. Further, if there exists a cluster $\mathscr{C}_s$, with a singleton state $Z(t)$, in the network, and if  $\mathscr{C}_s$ is connected to  $\mathscr{C}_r$, for some $r$, then by the cluster-input-equivalence condition, the singleton must connect to all nodes in  $\mathscr{C}_r$. Thus, without loss of generality, we can remove $\mathscr{C}_s$ from the network and consider it as an input to each node in $\mathscr{C}_r$, i.e., $F_{\mathscr{C}_r}(X_{\mathscr{C}_r}, t)$ becomes $F_{\mathscr{C}_r}(X_{\mathscr{C}_r}, t)+Z(t)$. Since 
% $Z(t)$ is independent of the state $X_{\mathscr{C}_r}$, the Jacobian of the intrinsic dynamics will not change. By this explanation, we can assume that  each cluster has at least two nodes. 
% \eremark 

In the proof of Theorem \ref{cluster_sychronization_main_result}, we need the following key lemmas. We first state the Courant-Fischer minimax Theorem, from \cite{Horn_Johnson}. 

\begin{lemma}\label{poincare_discrete}
Let $L$ be a positive semidefinite matrix in $\mathbb{R}^{l\times l}$. Let $\lambda^{(1)}\leq \cdots\leq\lambda^{(l)}$ be $l$ eigenvalues with $e^1, \cdots, e^l$ corresponding normalized orthogonal eigenvectors. For any $v\in \mathbb{R}^l$, if $v^Te^j=0$ for $1\leq j\leq k-1$, then \[v^T Lv\geq \lambda^{(k)} v^Tv.\]
\end{lemma}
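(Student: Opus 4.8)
The plan is to invoke the spectral theorem for the symmetric matrix $L$ and reduce the claimed inequality to a one-line comparison of nonnegative coefficients. Since $L$ is positive semidefinite, and hence symmetric, its normalized eigenvectors $e^1,\ldots,e^l$ form an orthonormal basis of $\mathbb{R}^l$, so that $(e^i)^T e^j=\delta_{ij}$ and $L e^j=\lambda^{(j)} e^j$. I would first expand the given vector $v$ in this basis as $v=\sum_{j=1}^{l} c_j\, e^j$, where by orthonormality the coefficients are recovered as $c_j=v^T e^j$.

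The next step uses the hypothesis. Because $v^T e^j=0$ for every $1\leq j\leq k-1$, the first $k-1$ coefficients vanish, so $c_j=0$ for $j<k$ and the expansion collapses to $v=\sum_{j=k}^{l} c_j\, e^j$. Substituting this into the quadratic form and using $L e^j=\lambda^{(j)} e^j$ together with orthonormality, I would compute
\[
v^T L v=\sum_{j=k}^{l}\lambda^{(j)} c_j^2,\qquad v^T v=\sum_{j=k}^{l} c_j^2 .
\]

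Finally, the inequality follows from the ordering of the eigenvalues. For every surviving index $j\geq k$ we have $\lambda^{(j)}\geq\lambda^{(k)}$, and each $c_j^2\geq 0$, so
\[
v^T L v=\sum_{j=k}^{l}\lambda^{(j)} c_j^2\;\geq\;\lambda^{(k)}\sum_{j=k}^{l} c_j^2=\lambda^{(k)}\, v^T v,
\]
which is the desired bound. There is no genuine obstacle here; the only point requiring care is that the orthogonality assumption must be against precisely the eigenvectors $e^1,\ldots,e^{k-1}$ associated with the $k-1$ \emph{smallest} eigenvalues, since this is exactly what removes the terms $\lambda^{(1)},\ldots,\lambda^{(k-1)}$ that could otherwise be smaller than $\lambda^{(k)}$. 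The positive semidefiniteness of $L$ is used only to guarantee a real orthonormal eigenbasis and the ordering of the spectrum; the inequality itself holds for any symmetric $L$.
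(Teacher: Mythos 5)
Your proof is correct, but note that the paper does not actually prove this lemma: it is stated as the Courant--Fischer minimax theorem and dispatched with a citation to Horn and Johnson. Your argument is therefore a genuinely different (and more self-contained) route. The full Courant--Fischer theorem characterizes $\lambda^{(k)}$ as a min-max over \emph{all} $k$-dimensional subspaces, $\lambda^{(k)}=\min_{\dim S=k}\max_{0\neq v\in S} \frac{v^T L v}{v^T v}$, whereas the lemma only needs the easy special case in which the orthogonality constraint is against the actual eigenvectors $e^1,\ldots,e^{k-1}$ of the $k-1$ smallest eigenvalues; there, as you show, a direct expansion $v=\sum_{j\geq k} c_j e^j$ in the orthonormal eigenbasis reduces the claim to comparing $\sum_{j\geq k}\lambda^{(j)}c_j^2$ with $\lambda^{(k)}\sum_{j\geq k}c_j^2$, which is immediate from the eigenvalue ordering. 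What the citation buys the paper is brevity; what your computation buys is transparency, plus the correct observation that positive semidefiniteness is inessential --- only symmetry (to guarantee a real orthonormal eigenbasis) and the ordering of the spectrum are used. One small caution: your phrase ``positive semidefinite, and hence symmetric'' should be read as adopting the convention that positive semidefinite means \emph{symmetric} positive semidefinite; nonnegativity of the quadratic form alone, without symmetry, would not give you the spectral theorem. That convention is harmless here, since in the paper the lemma is applied to graph Laplacians $\mathcal{L}_{\mathscr{C}_r}$ and $\bar{\mathcal{L}}$ of undirected graphs, which are symmetric by construction.
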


\begin{lemma}\label{Lyapanov-inequality}\cite[Lemma 3]{synchronization2014_journal}
Suppose that $P$ is a positive definite matrix and $A$ is an arbitrary matrix. If $\mu_{2,P}[A]= \mu$, then $P^2A+A^TP^2\leq 2\mu P^2$. 
\end{lemma}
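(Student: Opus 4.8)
The plan is to reduce the weighted matrix measure to an ordinary Euclidean one and then translate the resulting spectral bound into a Loewner-order inequality via a congruence transformation. First I would invoke the identity recorded in the notation above, namely $\M_{2,P}[A]=\M_2[PAP^{-1}]$, so that the hypothesis $\M_{2,P}[A]=\mu$ becomes $\M_2[PAP^{-1}]=\mu$. Writing $B:=PAP^{-1}$ and using that $P$ is symmetric (so that $P^{-1}$ is symmetric as well and $B^T=P^{-1}A^TP$), I would then apply the explicit formula for the Euclidean matrix measure from Table~\ref{tab-mu}, which gives $\M_2[B]=\lambda_{\max}\!\left(\tfrac12(B+B^T)\right)$.

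The next step is to read this spectral identity as a matrix inequality. Since $\tfrac12(B+B^T)$ is symmetric with largest eigenvalue equal to $\mu$, we have $\tfrac12(B+B^T)\leq \mu I$ in the Loewner order, equivalently
\[
PAP^{-1}+P^{-1}A^TP \leq 2\mu I.
\]
The final step is a congruence transformation by $P$: conjugating both sides by the symmetric positive definite matrix $P$ preserves the ordering, and the left-hand side collapses telescopically, since $P\cdot PAP^{-1}\cdot P=P^2A$ and $P\cdot P^{-1}A^TP\cdot P=A^TP^2$, while the right-hand side becomes $2\mu P^2$. This yields $P^2A+A^TP^2\leq 2\mu P^2$, as claimed.

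There is no serious obstacle here, as the argument is a short chain of standard reductions; the only point requiring care is the bookkeeping around the transpose, where the symmetry $P=P^T$ (hence $(PAP^{-1})^T=P^{-1}A^TP$) is used to put $B+B^T$ in the convenient form above, together with the fact that congruence by an invertible matrix preserves (and is in fact equivalent to) the Loewner ordering. I would state these two facts explicitly so that each inequality in the chain is justified.
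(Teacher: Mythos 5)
Your proof is correct and is essentially the only argument available here: the paper states Lemma~\ref{Lyapanov-inequality} without proof, citing \cite[Lemma 3]{synchronization2014_journal}, and your chain --- $\M_{2,P}[A]=\M_2[PAP^{-1}]$, the formula $\M_2[B]=\lambda_{\max}\!\left(\tfrac12(B+B^T)\right)$ from Table~\ref{tab-mu}, the spectral bound $\tfrac12(B+B^T)\leq \mu I$, and congruence by $P$ --- is exactly the standard derivation used in that reference. Your explicit attention to $P=P^T$ (so that $(PAP^{-1})^T=P^{-1}A^TP$) and to the fact that congruence by an invertible matrix preserves the Loewner order covers the only two points where the argument could go wrong, so there is nothing to add.
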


\textbf{Proof of Theorem \ref{cluster_sychronization_main_result}} 

 Let $w:=X-\bar X$, where 
 \[
 X=\lt({X_{\mathscr{C}_1}^1}^T, \ldots, {X_{\mathscr{C}_1}^{c_1}}^T, 
 \ldots,
 {X_{\mathscr{C}_K}^1}^T, \ldots, {X_{\mathscr{C}_K}^{c_K}}^T\rt)^T,
 \] 
 is a solution of $(\ref{general_network})$ and 
  \[
  \bar X=\lt(\lt(\mathbf{1}_{c_1}\otimes x_1\rt)^T, \ldots, \lt(\mathbf{1}_{c_K}\otimes x_K\rt)^T \rt)^T,
  \] 
 with $x_r := \frac{1}{c_r} \sum_{i=1}^{c_r}X_{\mathscr{C}_r}^{i}$ and $\mathbf{1}_{c_r}\in\r^{c_r}$ is a vector of ones. Let $w=\lt(w_1^T, \ldots, w_K^T\rt)^T$, where $w_r := \lt((X_{\mathscr{C}_r}^1- x_r)^T, \ldots, (X_{\mathscr{C}_r}^{c_r}-  x_r)^T\rt)^T\in\r^{c_rn}$, and define 
\[
\Phi(w)\;:=\;  \displaystyle\frac{1}{2} w^T \mathcal P w
 = \displaystyle\frac{1}{2}\sum _{r=1}^K w_r^T \left(I_{c_r}\otimes P^2\right)w_r\;.
\]
 Since $\Phi(w)=\displaystyle\frac{1}{2}\|\mathcal Pw\|^2_2$, to prove (\ref{theorem_main_inequality}), it suffices to show that 
 \[\displaystyle\frac{d}{dt}\Phi(w)\leq 2\mu\Phi(w).\]  
Let
\[\mathcal{F}(X,t) = \lt(
F^T_{\mathscr{C}_1} (X^1_{\mathscr{C}_1}, t), \ldots,  F^T_{\mathscr{C}_1} (X^{c_1}_{\mathscr{C}_1}, t), 
\ldots, 
F^T_{\mathscr{C}_K} (X^1_{\mathscr{C}_K}, t), \ldots,  F^T_{\mathscr{C}_K} (X^{c_K}_{\mathscr{C}_K}, t)
\rt)^T,\]
and
\[\bar {\mathcal{F}}(X,t) = \lt(\lt(\mathbf{1}_{c_1}\otimes y_1\rt)^T, \ldots,\lt(\mathbf{1}_{c_K}\otimes y_K\rt)^T \rt)^T\quad
\mbox{where $\;y_r = \dis\frac{1}{c_r}\sum_{i=1}^{c_r}F_{\mathscr{C}_r} (X^i_{\mathscr{C}_r}, t)$}.
\] 
Standard calculations show that the derivative of $\Phi$ is as follows:
\be{Phi-dot_discrete}
\bal
\displaystyle\frac{d\Phi}{dt}(w)
&=  w^T\mathcal P \lt(\mathcal{F}(X,t) - \bar {\mathcal{F}}(X,t)\rt)
-w^T\mathcal P(\mathcal{L}_{\mathscr{C}} \otimes D)w
- w^T \mathcal P(\bar{\mathcal L}\otimes D)\\
&=  w^T\mathcal P \lt(\mathcal{F}(X,t) -  \mathcal{F}(\bar X,t)\rt) 
+w^T\mathcal P \lt(\mathcal{F}(\bar X,t) - \bar {\mathcal{F}}(X,t)\rt)
-w^T\mathcal P(\mathcal{L}_{\mathscr{C}} \otimes D)w
- w^T \mathcal P(\bar{\mathcal L}\otimes D) w\\
&=  w^T\mathcal P \lt(\mathcal{F}(X,t) - \mathcal{F}(\bar X,t)\rt)
-w^T\mathcal P(\mathcal{L}_{\mathscr{C}} \otimes D)w
- w^T \mathcal P(\bar{\mathcal L}\otimes D) w\;.
\eal
\ee
In the second equation, we added and subtracted $w^T \mathcal P\mathcal{F}(\bar X,t)$, where $\mathcal{F}(\bar X,t)$ is written as 
\[\mathcal{F}(\bar X,t) =\lt(
\lt(\mathbf{1}_{c_1}\otimes F_{\mathscr{C}_1} (x_1, t)\rt)^T,\ldots, 
\lt(\mathbf{1}_{c_K}\otimes F_{\mathscr{C}_K} (x_K, t)\rt)^T
\rt)^T.\]
The last equality holds because $w_r^T (\mathbf{1}_{c_r}\otimes I_n) =0$ implies that 
\beqn
w^T\mathcal P \lt(\mathcal{F}(\bar X,t) -  \bar {\mathcal{F}}(X,t)\rt) 
&=& \sum_{r=1}^K w_r^T \lt(I_{c_r}\otimes P^2\rt)  \lt( \mathbf{1}_{c_r}\otimes \lt(F^T_{\mathscr{C}_r} (x_r, t)-y_r^T\rt)\rt)\\
&=& \sum_{r=1}^K w_r^T \lt(\mathbf{1}_{c_r} \otimes P^2 \lt( F^T_{\mathscr{C}_r} (x_r, t)-y_r^T\rt)\rt)\\
 &=& \sum_{r=1}^K w_r^T  \lt(\mathbf{1}_{c_r}\otimes I_n\rt)P^2 \lt(F^T_{\mathscr{C}_r} (x_r, t)- y_r^T \rt)\\
&=&0.
\eeqn 

 \textbf{Step 1.} We show that  
 \be{step1}
 -w^T\mathcal P(\mathcal{L}_{\mathscr{C}} \otimes D)w 
 \leq
 -\sum_{r=1}^K\lambda_{\mathscr{C}_r}^{(2)} w_r^T\lt(I_{c_r}\otimes P^2D\rt)w_r\;. 
 \ee
 Since $P^2D+DP^2$ is positive semidefinite, Cholesky decomposition yields an upper triangular matrix $M$ such that $P^2D+DP^2=2M^TM$. For any $r=1, \ldots, K$, 
\begin{equation*}
\begin{aligned}
 -w_r^T\left(I_{c_r}\otimes P^2\right)(\mathcal{L}_{\mathscr{C}_r} \otimes D)w_r
 & =-w_r^T\left(\mathcal{L}_{\mathscr{C}_r}\otimes P^2 D\right)w_r\\
 &= -\frac{1}{2} w_r^T\left(\mathcal{L}_{\mathscr{C}_r}\otimes \lt(P^2 D+DP^2\rt)\right)w_r\\
 &= -w_r^T\left(\mathcal{L}_{\mathscr{C}_r}\otimes \lt(M^TM\rt)\right)w_r\\
 &= -w_r^T\lt(I_{c_r}\otimes M^T\rt)\left(\mathcal{L}_{\mathscr{C}_r}\otimes I_n\right)\left(I_{c_r}\otimes M\right) w_r\\
 &\leq-  \lambda_{\mathscr{C}_r}^{(2)} \left((I_{c_r}\otimes M)w_r\right)^T(I_{c_r}\otimes M)w_r\\
 &=-\lambda_{\mathscr{C}_r}^{(2)} w_r^T\lt(I_{c_r}\otimes M^T M\rt)w_r\\
 &=-\lambda_{\mathscr{C}_r}^{(2)} w_r^T\lt(I_{c_r}\otimes P^2D\rt)w_r\;.
 \end{aligned}
\end{equation*}
Note that the inequality holds by Lemma \ref{poincare_discrete}. To apply Lemma \ref{poincare_discrete}, we need to show that
 \[\lt(\left(I_{c_r}\otimes M\right) w_r\rt)^T(\mathbf{1}_{c_r}\otimes I_n) = 0.\]  
By definition of $w_r$, $w_r^T\mathbf{1}_{nc_r}=0$ and hence 
\beqn
&&\lt(\left(I_{c_r}\otimes M\right) w_r\rt)^T(\mathbf{1}_{c_r}\otimes I_n)
= w_r^T \left(I_{c_r}\otimes M^T\right) (\mathbf{1}_{c_r}\otimes I_n)
= w_r^T\lt(\mathbf{1}_{c_r}\otimes M^T\rt)\\
&&\qquad\qquad\qquad = \sum_{i=1}^{c_r} (X_{\mathscr{C}_r}^i- x_r)^TM^T
=\lt(\sum_{i=1}^{c_r} (X_{\mathscr{C}_r}^i- x_r)^T\rt)M^T
= 0.
\eeqn 
Both $\mathcal P$ and $\mathcal{L}_{\mathscr{C}}$ are block diagonal with blocks of same sizes, 
$c_1,\ldots, c_K$, so we have:
\beqn 
-w^T\mathcal P(\mathcal{L}_{\mathscr{C}} \otimes D)w 
= -\sum_{r=1}^Kw_r^T\left(I_{c_r}\otimes P^2\right)(\mathcal{L}_{\mathscr{C}_r} \otimes D)w_r
\leq -\sum_{r=1}^K\lambda_{\mathscr{C}_r}^{(2)} w_r^T\lt(I_{c_r}\otimes P^2D\rt)w_r\;.
\eeqn 

 \textbf{Step 2.} We show that
  \be{step2}
  -w^T \mathcal P(\bar{\mathcal L}\otimes D) w
  \leq
- \sum_{r=1}^K \bar \lambda^{(2)}w_r^T\lt(I_{c_r}\otimes P^2D\rt)w_r\;.
\ee 
The proof is analogous to the previous step. 
\beqn
-w^T \mathcal P(\bar{\mathcal L}\otimes D) w 
&=&-w^T \lt(I_N\otimes P^2\rt)(\bar{\mathcal L}\otimes D) w\\
&=&-w^T \lt(\bar{\mathcal L}\otimes P^2D\rt) w\\
&=& -\frac{1}{2} w^T \lt(\bar{\mathcal L}\otimes \lt(P^2D+DP^2\rt)\rt) w\\
&=& - w^T \lt(\bar{\mathcal L}\otimes M^TM\rt) w\\
&=& - w^T\lt(I_N\otimes M^T\rt) \lt(\bar{\mathcal L}\otimes I_n \rt) \lt(I_N\otimes M \rt)w\\
&\leq& -  \bar\lambda^{(2)} w^T\lt(I_N\otimes M^T\rt) \lt(I_N\otimes M \rt)w\\
&=& -  \bar\lambda^{(2)} w^T\lt(I_N\otimes M^TM\rt)w\\
&=& -  \bar\lambda^{(2)} w^T\lt(I_N\otimes P^2D\rt)w\\
&=&  -\sum_{r=1}^K\bar \lambda^{(2)} w_r^T\lt(I_{c_r}\otimes P^2D\rt)w_r\;.
  \eeqn 

\textbf{Step 3.} We show that 
\be{step3}
w^T\mathcal{P}(\mathcal{F}(X, t)-\mathcal{F}(\bar X, t))= \sum_{r=1}^K \displaystyle\sum_{i=1}^{c_r}\displaystyle\int_0^1 (X_{\mathscr{C}_r}^i- x_r)^T P^2 
J_{F_{\mathscr{C}_r}} \lt(x_r +\tau(X_{\mathscr{C}_r}^i- x_r)\rt)(X_{\mathscr{C}_r}^i- x_r)\;d\tau.
\ee 
Note that 
$
 w^T\mathcal{P}(\mathcal{F}(X, t)-\mathcal{F}(\bar X, t))
 =
 \sum_{r=1}^K w_r^T\lt(I_{c_r}\otimes P^2\rt) \tilde{\mathcal{F}}_r (X_{\mathscr{C}_r})\;, 
$
where 
\[
\tilde{\mathcal{F}}_r (X_{\mathscr{C}_r}) 
=\lt( F^T_{\mathscr{C}_r} (X^1_{\mathscr{C}_r}, t)- F^T_{\mathscr{C}_r} (x_r, t),
 \ldots, 
 F^T_{\mathscr{C}_r} (X^{c_r}_{\mathscr{C}_r}, t)- F^T_{\mathscr{C}_r} (x_r, t)
 \rt)^T.
\]
 By the Mean Value Theorem for integrals, for any $r=1,\ldots, K$, 
 \begin{equation*}
\begin{aligned}
w_r^T\lt(I_{c_r}\otimes P^2\rt) \tilde{\mathcal{F}}_r (X_{\mathscr{C}_r})
&=\displaystyle\sum_{i=1}^{c_r} 
(X_{\mathscr{C}_r}^i- x_r)^T P^2 \lt(F_{\mathscr{C}_r} (X^i_{\mathscr{C}_r}, t)- F_{\mathscr{C}_r} (x_r, t)\rt)\\
&=\displaystyle\sum_{i=1}^{c_r}\displaystyle\int_0^1 (X_{\mathscr{C}_r}^i- x_r)^T P^2 
J_{F_{\mathscr{C}_r}} \lt(x_r +\tau(X_{\mathscr{C}_r}^i- x_r)\rt)(X_{\mathscr{C}_r}^i- x_r)\;d\tau.
\end{aligned}
\end{equation*}
Adding over $r$, $r=1,\ldots, K$, we obtain Equation~(\ref{step3}). 

Note that the sum of the left hand side of Equations~(\ref{step1})-(\ref{step3}), is equal to $\frac{d\Phi}{dt}$. 
Combining Steps 1-3, we have shown that 
\[\frac{d\Phi}{dt} \leq \sum_{r=1}^K \phi_r,\]
 where for any $r=1,\ldots, K$, 
\be{phi_r}
\bal
\phi_r&:= w_r^T\lt(I_{c_r}\otimes P^2\rt) \tilde{\mathcal{F}}_r (X_{\mathscr{C}_r}) 
- w_r^T\lt(I_{c_r}\otimes P^2\rt) \lt(I_{c_r}\otimes \lambda_{\mathscr{C}_r}^{(2)}D\rt) w_r
- w_r^T\lt(I_{c_r}\otimes P^2\rt) \lt(I_{c_r}\otimes \bar\lambda^{(2)}D\rt) w_r\\
&=\displaystyle\sum_{i=1}^{c_r}\displaystyle\int_0^1 (X_{\mathscr{C}_r}^i- x_r)^T P^2 
\lt[J_{F_{\mathscr{C}_r}} \lt(x_r+\tau(X_{\mathscr{C}_r}^i- x_r)\rt) -  \lambda_{\mathscr{C}_r}^{(2)}D -  \bar\lambda^{(2)}D\rt]
(X_{\mathscr{C}_r}^i- x_r)\;d\tau\\
 &\leq \displaystyle\sum_{r=1}^{c_r}\displaystyle\frac{2\mu}{2}\displaystyle\int_0^1d\tau \;(X_{\mathscr{C}_r}^i- x_r)^T P^2(X_{\mathscr{C}_r}^i- x_r) \\
  &=\displaystyle\frac{2\mu}{2} \;w_r^T\lt(I_{c_r}\otimes P^2\rt)w_r\;.
\eal
\ee
The inequality holds by applying Lemma \ref{Lyapanov-inequality} to Equation (\ref{theorem_mu}): we obtain, for any $r=1,\ldots, K$, and any $(x,t)\in V\times[0,\infty)$, 
\beqn
P^2\lt[J_{F_{\mathscr{C}_r}}(x,t)-\lt(\lambda^{(2)}_{\mathscr{C}_r} + \bar\lambda^{(2)}\rt) \D\rt]+
\lt[J_{F_{\mathscr{C}_r}}^T(x,t)-\lt(\lambda^{(2)}_{\mathscr{C}_r} + \bar\lambda^{(2)}\rt) \D\rt] P^2
\leq2\mu P^2. 
\eeqn
Summing both sides of Equation~(\ref{phi_r}) over $r$, for $r=1,\ldots, K$, we obtain the desired result, $\frac{d\Phi}{dt}(w)\;\leq\; 2\mu\Phi(w).$
\qed

%-------------------------------------------------------------------------------
%-------------------------------------------------------------------------------
\section{Network Reduction}
\label{Network_Reduction}
%-------------------------------------------------------------------------------
%

We now outline a procedure for reducing a cluster synchronized network to a smaller network of synchronized states, commonly referred to as the {\em quotient network}. Quotient networks have been applied to find reductions of cluster synchronized networks with slight parameter mismatches in the $F^i$ \cite{sorrentino2016approximate}. In a cluster synchronized network, oscillators in the same cluster will have homogeneous dynamics after the initial transients. The longer-term dynamics of the network can thus be reduced to those of a network where each node corresponds to a cluster. This reduction loses no information about the long-term dynamics and can be implemented to simplify analysis.
   
   \blem Under Assumption \ref{Assumption}, the dynamics of Equation~(\ref{general_network}) on $\mathscr{S}_K$ can be described as follows. 
\begin{align}\label{synchronized_state}
\dot{{X}}_{\mathscr{C}_r} (t)&= F_{ \mathscr{C}_r}\lt( {X}_{\mathscr{C}_r}(t),t\rt) 
+ \sum_{\mathscr{C}_s\in\mathscr{N}_{\mathscr{C}_r}}  \eta_{\mathscr{C}_r\mathscr{C}_s} \D\left( {X}_{\mathscr{C}_s}(t)-  {X}_{\mathscr{C}_r}(t)\right) \qquad r=1,\ldots, K, 
    \end{align}
where $\mathscr{N}_{\mathscr{C}_r}$ denotes the set of all the clusters that are connected to ${\mathscr{C}_r}$, as in Assumption \ref{Assumption}. 
\elem
 
 \bp
This follows by the cluster-input-equivalence condition, Equation \eqref{input_equi}.
\ep

The simplified dynamics in Equation~(\ref{synchronized_state}) represent a powerful tool for facilitating analysis of the dynamics of cluster synchronized systems.

%
%
%

%
%
%-------------------------------------------------------------------------------
%-------------------------------------------------------------------------------
%-----------------------------------------------------------------------------------
\section{Application to networks of heterogeneous FitzHugh-Nagumo neuronal oscillators}
\label{Application}
%-----------------------------------------------------------------------------------

In this section, we apply Theorem \ref{cluster_sychronization_main_result} to a network of $N$ FitzHugh-Nagumo (FN) neuronal oscillators with graph $\mathcal{G}$.

Let $(y^i,z^i)^T\in\r^2$ be the state of oscillator $i$ and $I^i$ be the external input to oscillator $i$, for $i=1,\ldots, N$.  $y^i$ and $z^i$ represent the membrane potential and the recovery variable, repectively, and $I^i$ the input current.  
The FN dynamics are
\be{FN_network}
\bal
\dot{y}^i &= f^i(y^i) -z^i + I^i  + \gamma\sum_{j\in \mathcal{N}^i} \gamma^{ij}(y^j-y^i),\\
\dot{z}^i &= \epsilon^i (y^i-b^i z^i), 
\eal
\ee
where $f^i$ is a cubic function, $f^i(y) = y-\frac{y^3}{3}-a^i$, $\gamma, a^i, b^i>0$, $0< \epsilon^i\ll1$, and
$\mathcal{N}^i$ denotes the set of all the neighbors of node $i$ in the network. 
Using the notation of Theorem \ref{cluster_sychronization_main_result}, 
$n=2$, $X^i= (y^i,z^i)^T$, 
$F^i(X^i,t) = \left(f^i(y^i) -z^i + I^i, \epsilon^i (y^i-b^i z^i)\right)^T$,  
$\D= \diag(\gamma, 0)$ is the diffusion matrix, 
and the $\gamma^{ij}$ are the edge weights on the graph $\mathcal{G}$. 

Assume that there exist $K\geq1$ clusters $\mathscr{C}_1, \ldots, \mathscr{C}_K$ of FN oscillators such that $a^i = a_{\mathscr{C}_r}$, $b^i= b_{\mathscr{C}_r}$, $\e^i=\e_{ \mathscr{C}_r}$, and $I^i = I_{\mathscr{C}_r}$ for all FN oscillators $i\in\mathscr{C}_r$ and all clusters $r= 1, \ldots, K$. 

 In what follows we show that, for $K=1$ cluster, if $\gamma\lambda^{(2)}>1$, then Equation~(\ref{FN_network}) synchronizes. More generally, if $K>1$, and for all $r=1,\ldots, K$, $\e_{ \mathscr{C}_r}=\e$, and $\gamma\lambda^{(2)}_{\mathscr{C}_r} +\gamma \bar\lambda^{(2)}>1$, then Equation~(\ref{FN_network}) converges to its $K-$cluster synchronization manifold. 

\begin{proposition} \label{application_FN_proposition}
Consider Equation~(\ref{FN_network}), with Assumption~\ref{Assumption}. For all $r=1,\ldots, K$, let 
 \[\gamma >\dis\frac{1+\a_r}{\lambda^{(2)}_{\mathscr{C}_r}+ \bar\lambda^{(2)}},\]    
 where $\a_r=\frac{\lt(\e_{\mathscr{C}_r} p - 1/p\rt)^2}{4b_{\mathscr{C}_r}\e_{\mathscr{C}_r}}$ and $p=\max_{r}\frac{1}{\sqrt{\e_{\mathscr{C}_r}}}$. 
 Then for any pair of FN oscillators $\{(y^i,z^i)^T,(y^j,z^j)^T\}$
such that $(i,j)\in\mathscr{C}_r$, 
\[y^i(t) - y^j(t) \to 0, \quad z^i(t) - z^j(t) \to 0, \quad \mbox{as $t\to\infty$}. \]
\end{proposition}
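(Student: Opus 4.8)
The plan is to obtain the result as a direct application of Theorem~\ref{cluster_sychronization_main_result}, so the entire problem reduces to exhibiting one positive definite weighting matrix $P$ that makes the quantity $\mu$ of Equation~(\ref{theorem_mu}) strictly negative. With the cluster dynamics $F_{\mathscr{C}_r}(y,z)=\lt(f_{\mathscr{C}_r}(y)-z+I_{\mathscr{C}_r},\ \e_{\mathscr{C}_r}(y-b_{\mathscr{C}_r}z)\rt)^T$ and $f_{\mathscr{C}_r}'(y)=1-y^2$, the intrinsic Jacobian is
\[
J_{F_{\mathscr{C}_r}}(y,z)=\begin{pmatrix} 1-y^2 & -1\\ \e_{\mathscr{C}_r} & -\e_{\mathscr{C}_r}b_{\mathscr{C}_r}\end{pmatrix},
\]
which depends on neither $z$ nor $t$. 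Writing $\Lambda_r:=\lambda^{(2)}_{\mathscr{C}_r}+\bar\lambda^{(2)}$ and using $\D=\diag(\gamma,0)$, the matrix entering Equation~(\ref{theorem_mu}) is
\[
A_r(y):=J_{F_{\mathscr{C}_r}}(y,z)-\Lambda_r\D=\begin{pmatrix} 1-y^2-\gamma\Lambda_r & -1\\ \e_{\mathscr{C}_r} & -\e_{\mathscr{C}_r}b_{\mathscr{C}_r}\end{pmatrix}.
\]

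For the weighting I would take the diagonal matrix $P=\diag(1,p)$ with $p=\max_r \e_{\mathscr{C}_r}^{-1/2}$ as prescribed. Then $P^2\D+\D P^2=\diag(2\gamma,0)$ is positive semidefinite, so $P$ is admissible in Theorem~\ref{cluster_sychronization_main_result}. Since $\M_{2,P}[A_r]=\M_2[PA_rP^{-1}]$, and the $L^2$ row of Table~\ref{tab-mu} identifies $\M_2$ with the largest eigenvalue of the symmetric part, conjugation by $P$ sends the $(1,2)$ entry $-1$ to $-1/p$ and the $(2,1)$ entry $\e_{\mathscr{C}_r}$ to $p\,\e_{\mathscr{C}_r}$, so $\M_{2,P}[A_r]$ equals the largest eigenvalue of
\[
S_r(y)=\begin{pmatrix} 1-y^2-\gamma\Lambda_r & \tfrac12\lt(\e_{\mathscr{C}_r}p-\tfrac1p\rt)\\ \tfrac12\lt(\e_{\mathscr{C}_r}p-\tfrac1p\rt) & -\e_{\mathscr{C}_r}b_{\mathscr{C}_r}\end{pmatrix}.
\]

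Next I would bound $\sup_{(x,t)}\M_{2,P}[A_r]$. The largest eigenvalue of a symmetric $2\times2$ matrix is strictly increasing in its $(1,1)$ entry, and that entry $1-y^2-\gamma\Lambda_r$ is largest at $y=0$; since this bound is independent of $z$, $t$, and the domain, the supremum over all states is at most the largest eigenvalue of $S_r(0)$. That eigenvalue is negative exactly when $\operatorname{tr}S_r(0)<0$ and $\det S_r(0)>0$. Under the hypothesis $\gamma>(1+\a_r)/\Lambda_r$ we have $\gamma\Lambda_r>1+\a_r\geq1$, so the $(1,1)$ entry $1-\gamma\Lambda_r$ is negative and the trace condition is automatic; the determinant condition $\det S_r(0)=(\gamma\Lambda_r-1)\e_{\mathscr{C}_r}b_{\mathscr{C}_r}-\tfrac14\lt(\e_{\mathscr{C}_r}p-1/p\rt)^2>0$ rearranges to $\gamma\Lambda_r-1>\a_r$, which is precisely the hypothesis. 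Hence $\M_{2,P}[A_r]<0$ for each $r$, so $\mu<0$, and Theorem~\ref{cluster_sychronization_main_result} yields $X^i_{\mathscr{C}_r}(t)-X^j_{\mathscr{C}_r}(t)\to0$, i.e.\ both $y^i-y^j\to0$ and $z^i-z^j\to0$ within each cluster.

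The one genuinely delicate point is that Theorem~\ref{cluster_sychronization_main_result} forces a single $P$ to serve all clusters at once, while the off-diagonal term $\tfrac12(\e_{\mathscr{C}_r}p-1/p)$ can be annihilated only for the cluster whose recovery rate satisfies $p=\e_{\mathscr{C}_r}^{-1/2}$. This unremovable residual is exactly what produces the penalty $\a_r=(\e_{\mathscr{C}_r}p-1/p)^2/(4b_{\mathscr{C}_r}\e_{\mathscr{C}_r})$ in the determinant bound. When all $\e_{\mathscr{C}_r}$ coincide, choosing $p=\e^{-1/2}$ makes every $\a_r=0$ and recovers the clean condition $\gamma\Lambda_r>1$ quoted before the proposition; the heterogeneous case is where the shared-$P$ constraint exacts its cost, and I expect verifying that the single value $p=\max_r\e_{\mathscr{C}_r}^{-1/2}$ still delivers a uniform negative bound across all clusters to be the main obstacle.
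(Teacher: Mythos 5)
Your proposal is correct and follows essentially the same route as the paper: the same choice $P=\diag(1,p)$ with $p=\max_r \e_{\mathscr{C}_r}^{-1/2}$, the same symmetrized matrix, and the same trace/determinant check showing $\mu<0$ so that Theorem~\ref{cluster_sychronization_main_result} applies. The only cosmetic difference is that you first reduce to $y=0$ by monotonicity of $\lambda_{\max}$ in the $(1,1)$ entry, whereas the paper keeps $1-y^2$ and verifies the trace and determinant signs uniformly in $y$; also, the ``main obstacle'' you flag at the end is already resolved by your own determinant computation, since $\a_r$ is defined relative to the fixed shared $p$.
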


\begin{proof}
To apply Theorem \ref{cluster_sychronization_main_result}, we find a positive definite matrix $P$ such that $P^2D+DP^2$ is positive semidefinite and 
\[\mu:= \max_r\sup_{(y,z)^T\in\r^2}\M_{2,P}\lt[ J_{F_{\mathscr{C}_r}}(y,z) - \lt(\lambda^{(2)}_{\mathscr{C}_r} + \bar\lambda^{(2)}\rt)D\rt] <0.\]
% where $J_{\mathscr{C}_r}(y,z)$ is the Jacobian of the FN oscillators in ${\mathscr{C}_r}$ and is equal to 
% \[
% J_{\mathscr{C}_r}(y,z)=\left(\begin{array}{cc} 1-y^2 & -1\\ 
% \epsilon_{\mathscr{C}_r} & -b_{\mathscr{C}_r} \epsilon_{\mathscr{C}_r}
% \end{array} \right). 
% \]
Let $P=\diag(1, p)$ so that $P^2 D+DP^2 = \diag(2\gamma, 0)$, which is positive semidefinite. Then 
\be{}
\bal
  \M_{2,P}\lt[ J_{F_{\mathscr{C}_r}}(y,z)- \lt(\lambda^{(2)}_{\mathscr{C}_r} + \bar\lambda^{(2)}\rt)D\rt]
 &=\M_{2}\lt[P\lt( J_{F_{\mathscr{C}_r}}(y,z)- \lt(\lambda^{(2)}_{\mathscr{C}_r} + \bar\lambda^{(2)}\rt) D\rt)P^{-1} \rt]\\
&=\lambda_{\max}\lt[\left(\begin{array}{cc} 
1-y^2-\gamma\lambda^{(2)}_{\mathscr{C}_r}-\gamma \bar\lambda^{(2)}  & \frac{\e_{\mathscr{C}_r} p}{2} - \frac{1}{2p}
\\  \frac{\e_{\mathscr{C}_r} p}{2} - \frac{1}{2p} & -b_{\mathscr{C}_r}\epsilon_{\mathscr{C}_r}
\end{array} \right) \rt].  
\eal
\ee
To see this recall that $\M_{2,P}[A] = \M_2[PAP^{-1}]$, and, by Remark \ref{explicit_value_LN}, $\M_2[A] = \lambda_{\max} \lt[\frac{A+A^T}{2}\rt]$, where $\lambda_{\max}[B]$ denotes the largest eigenvalue of $B$. Note that the matrix shown in the second line, call it $\mathcal B$, is the symmetric part of $P\lt( J_{F_{\mathscr{C}_r}}(y,z)- \lt(\lambda^{(2)}_{\mathscr{C}_r} + \bar\lambda^{(2)}\rt) D\rt)P^{-1}$. 
Standard calculations show that if $\gamma >\frac{1+\a_r}{\lambda^{(2)}_{\mathscr{C}_r}+ \bar\lambda^{(2)}}\geq \frac{1}{\lambda^{(2)}_{\mathscr{C}_r}+ \bar\lambda^{(2)}}$
then the trace and the determinant of $\mathcal B$ satisfy
\beqn 
&\mbox{Tr} = 1-y^2- \gamma\lambda^{(2)}_{\mathscr{C}_r}-\gamma \bar\lambda^{(2)} - b_{\mathscr{C}_r}\epsilon_{\mathscr{C}_r}<0, 
\quad\mbox{Det} = -b_{\mathscr{C}_r}\epsilon_{\mathscr{C}_r} \lt( 1-y^2- \gamma\lambda^{(2)}_{\mathscr{C}_r}-\gamma \bar\lambda^{(2)} +\a_r \rt)>0.  
\eeqn 
Therefore, $\lambda_{\max}[\mathcal B]<0$ and Theorem \ref{cluster_sychronization_main_result} yields the desired result. 

\end{proof}
 
\bremark
In Proposition \ref{application_FN_proposition}:
\begin{enumerate}
\item If we assume that, for all $r=1,\ldots, K$, $\e_{\mathscr{C}_r}= \e$, then
$\a_r=0$ and we obtain a smaller lower bound for $\gamma$, namely
\[
\gamma > \dis\frac{1} {\lambda^{(2)}_{\mathscr{C}_i}+ \bar\lambda^{(2)}}.
\]
\item Non-diagonal $P$ does not give a smaller lower bound for $\gamma$.
\item Theorem~\ref{cluster_sychronization_main_result} can be used to derive an analogous condition for a network of FN oscillators with time varying parameters.
\end{enumerate}
\eremark

\bremark In the previous work \cite{davison_sync_2016}, we showed that for $K=1$, if $\gamma \geq\frac{1+ \epsilon+{\beta^2}/{3}}{\lambda^{(2)}}$, where $\beta$ is the ultimate bound for the $y$ variable, then Equation~(\ref{FN_network}) synchronizes. By Proposition \ref{application_FN_proposition} we have found a smaller lower bound for $\gamma$, $\gamma >\frac{1}{\lambda^{(2)}}$, that guarantees synchronization. 
\eremark

%
%%%%%%%%%%%%%%%%%%%%%%%%%%%%%%%%%%%%%%%%%%%%%%%%%%%%%%%%%%%%%%%%%%%%%%%%%%%%%%%%%%%
\subsection {Examples }
%%%%%%%%%%%%%%%%%%%%%%%%%%%%%%%%%%%%%%%%%%%%%%%%%%%%%%%%%%%%%%%%%%%%%%%%%%%%%%%%%%%
%
\begin{example}
\em{
In this example, we consider a network of 17 FN oscillators (shown in the left panel of Figure~\ref{CIE_Obeyance_Illustr}), wherein each oscillator has the dynamics associated with one of three different clusters: (i) $\mathscr{C}_1$ is a cluster of six oscillators (represented by orange circles) with $a=0.1$, $b = 0.1$, $\epsilon=0.08$ and $I=0.9$; (ii) $\mathscr{C}_2$ is a cluster of seven oscillators (represented by green squares) with $a=0.5$, $b = 0.7$, $\epsilon=0.08$ and $I=3.0$; and (iii) $\mathscr{C}_3$ is a cluster of four oscillators (represented by blue triangles) with $a=0.9$, $b = 0.3$, $\epsilon=0.08$ and $I=0.1$. For this network, the second smallest eigenvalue of the Laplacian of each corresponding subgraph can be computed as $\lambda^{(2)}_{\mathscr{C}_1} = 1.83$, $\lambda^{(2)}_{\mathscr{C}_2} = \lambda^{(2)}_{\mathscr{C}_3} = 2.00$ and $\bar\lambda^{(2)} = 13.10$. Then from Proposition~\ref{application_FN_proposition}, we can conclude that the clusters will synchronize whenever $\gamma > 0.067$, since the cluster-input-equivalence condition \eqref{input_equi} holds true. As shown in Figure~\ref{CIE_Obeyance_Illustr}, the network indeed displays fast convergence to cluster synchronization with $\gamma = 0.120$.
\begin{figure}[ht!]
	\centering
	\includegraphics[width =0.90\textwidth]{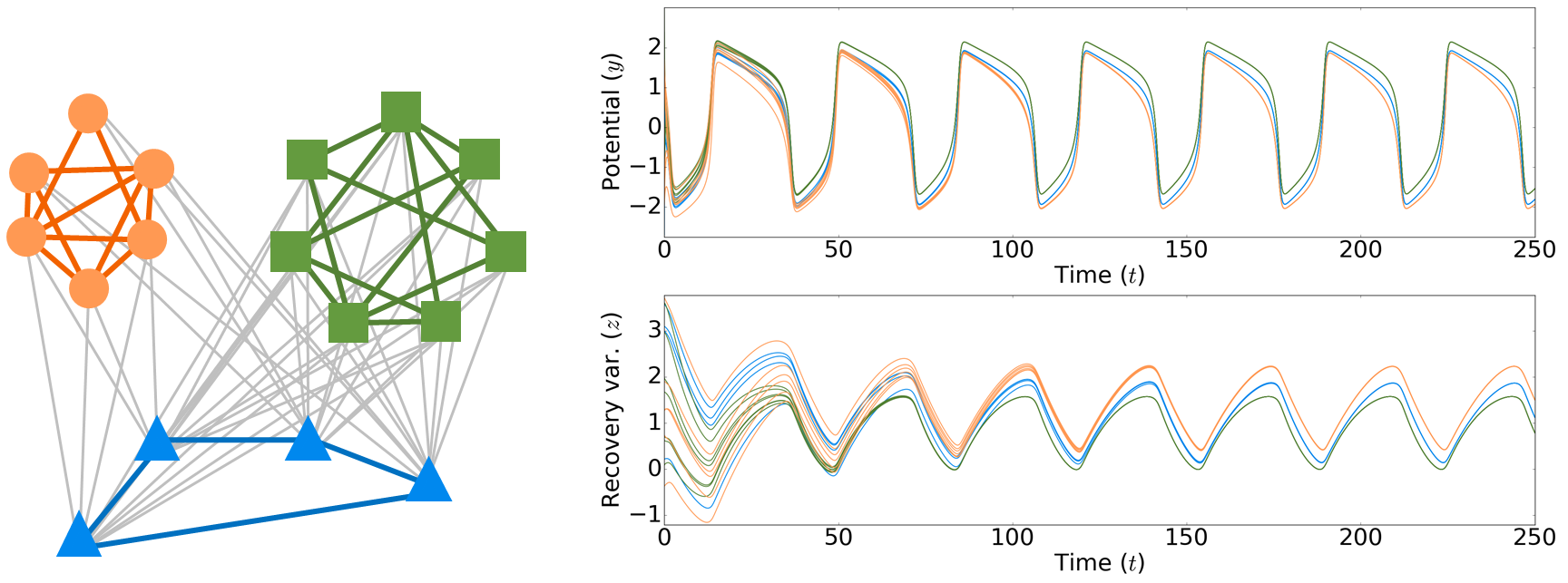}
	\caption{Cluster synchronization in a network of 17 heterogeneous FN oscillators.}
	\label{CIE_Obeyance_Illustr}
\end{figure}

However, when we introduce heterogeneity within the third cluster $\mathscr{C}_3$, e.g., by changing parameter values to $a=0.8$, $b = 0.9$ and the external input to $I=0.7$ for two of the four oscillators (these are now represented by magenta pentagons in Figure~\ref{Single_Agent_Cluster_Sync}), the blue cluster breaks into two clusters, each with two oscillators. As a result, the orange cluster no longer satisfies the cluster-input-equivalence condition \eqref{input_equi} unless it too breaks into two clusters of three oscillators each (shown in light and dark orange in Figure~\ref{Single_Agent_Cluster_Sync}).  By Proposition~\ref{application_FN_proposition} the condition for cluster synchronization is again $\gamma > 0.067$; however, now there are five clusters as illustrated in Figure~\ref{Single_Agent_Cluster_Sync} for $\gamma = 0.120$. 
\begin{figure}[h!]
	\centering
	\includegraphics[width =0.90\textwidth]{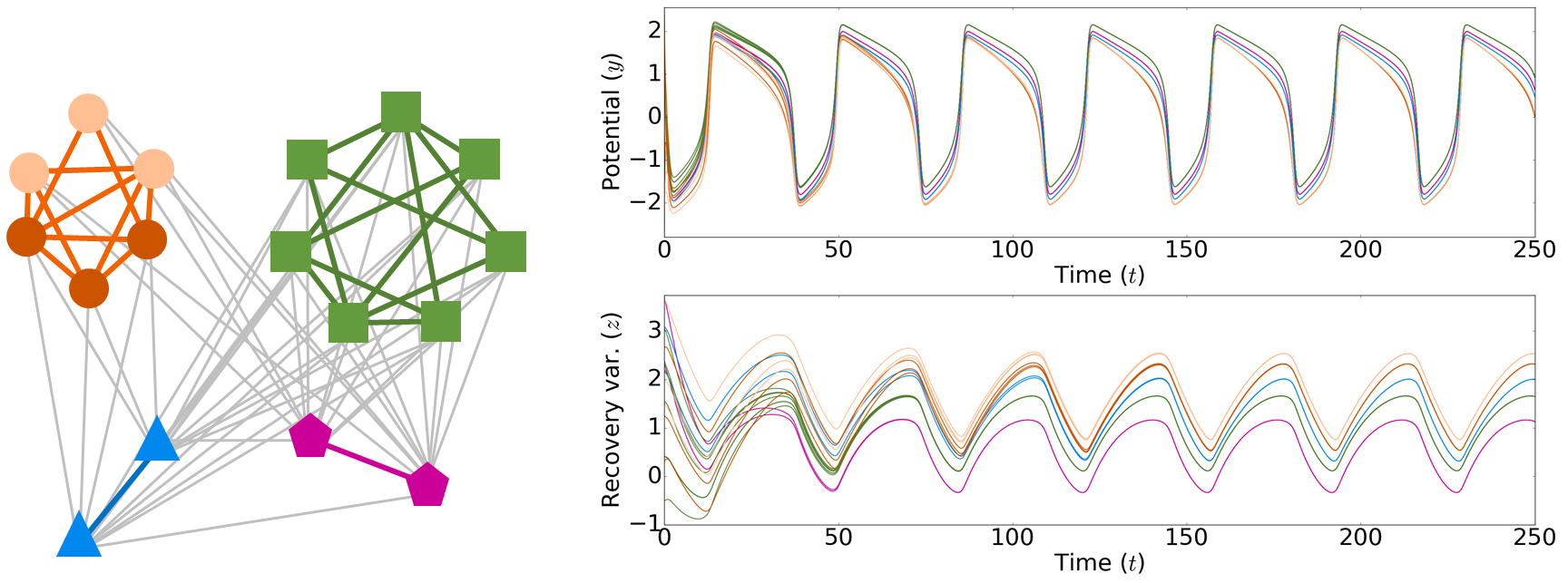}
	\caption{Emergence of new clusters, for a total of five, in the same network (as in Figure~\ref{CIE_Obeyance_Illustr}) of 17 FN oscillators as a result of modifying the dynamics of the two oscillators shown in magenta.}
	\label{Single_Agent_Cluster_Sync}
\end{figure}
}
\end{example}
\begin{example}
\em{
To illustrate the critical role of the cluster-input-equivalence condition \eqref{input_equi} in cluster synchronization, we consider a slightly perturbed version of the network shown in Figure~\ref{CIE_Obeyance_Illustr} by removing some connections between clusters. Removal of connections between the clusters leads to a lower connectivity of the subgraph $\bar{\mathcal{G}}$; for the network connections in Figure~\ref{CIE_Violation_Illustr}, $\bar\lambda^{(2)} = 6.81$. Although everything else remain same as the original network considered in the previous example, this perturbation leads to a violation of the cluster-input-equivalence condition. As a result, the network fails to achieve cluster synchronization even when $\gamma = 0.120 > 0.116$ satisfies the sufficient condition (Fig~\ref{CIE_Violation_Illustr}).
\begin{figure}[th!]
	\centering
	\includegraphics[width =0.90\textwidth]{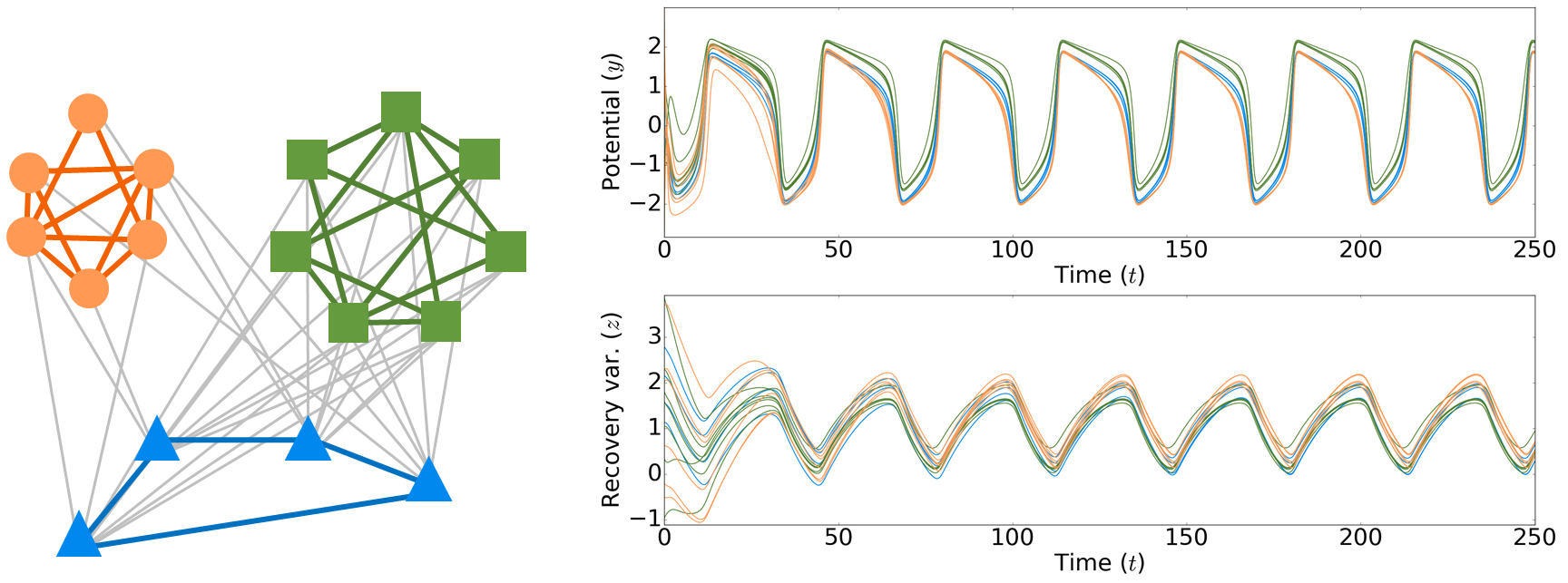}
	\caption{Collapse of cluster synchronization in a network of 17 heterogeneous FN oscillators.  $\gamma = 0.120$ as in Figures~\ref{CIE_Obeyance_Illustr} and \ref{Single_Agent_Cluster_Sync}, but the cluster-input-equivalence condition is no longer satisfied.}
	\label{CIE_Violation_Illustr}
\end{figure}
}
\end{example}
\begin{example}
\em{
In this example we consider a large network with 200 FN oscillators (refer to the left panel of Figure~\ref{Cluster_Sync_with_IntraCluster} for a representation of this network) obtained through interconnection of a complete graph (of size 100) with a star graph (of size 100). By connecting each node in the complete graph with a unique node in the star graph and edge of weight $0.25$, we ensure that the cluster-input-equivalence condition holds true. The FN oscillators ($\mathscr{C}_1$) in the complete graph (represented by magenta squares) have $a=0.9$, $b = 0.5$, and they are excited with an external current $I=2.0$. On the other hand, the FN oscillators ($\mathscr{C}_2$) in the star graph (represented by green triangles) have $a=0.7$, $b = 0.8$, and they are excited with an external current $I=0.3$. Also, we let $\epsilon = 0.08$ for each of these 200 oscillators. For this network $\lambda^{(2)}_{\mathscr{C}_1} =100$, $\lambda^{(2)}_{\mathscr{C}_2} = 0.04$ and $\bar\lambda^{(2)} = 0$. By choosing a diffusion constant $\gamma = 0.02$ such that $\gamma > 1/\big(\lambda^{(2)}_{\mathscr{C}_1} + \bar\lambda^{(2)} \big)$ but $\gamma < 1/\big(\lambda^{(2)}_{\mathscr{C}_2} + \bar\lambda^{(2)} \big)$ we do not obey the sufficient condition. However, numerical simulation (Figure~\ref{Cluster_Sync_with_IntraCluster}) shows that the magenta cluster ($\mathscr{C}_1$) synchronizes nevertheless as suggested by the fact that $\gamma > 1/\big(\lambda^{(2)}_{\mathscr{C}_1} + \bar\lambda^{(2)} \big)$ is satisfied. 
\begin{figure}[h]
	\centering
	\includegraphics[width =0.9\textwidth]{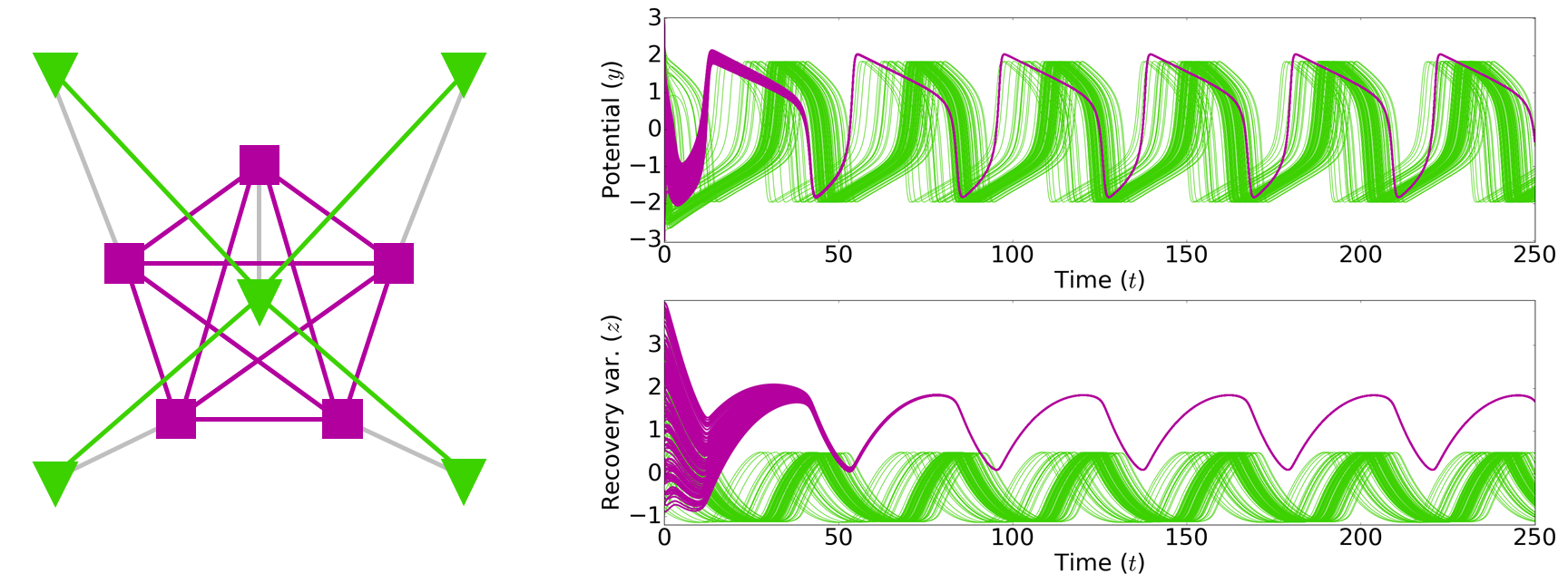}
	\caption{Synchronization of only one of two clusters in a large network of heterogeneous FN oscillators when the coupling strength takes an intermediate value.  There are 100 oscillators in one cluster connected through a star graph (green) and 100 oscillators in a second cluster connected through a complete graph (magenta). The network on the left illustrates the connections between clusters (in gray) in the case of 5 oscillators in each cluster.}
	\label{Cluster_Sync_with_IntraCluster}
\end{figure}
}
\end{example}
%
%
%

%
%
%-------------------------------------------------------------------------------
%-------------------------------------------------------------------------------
\section{Conclusion}
%-------------------------------------------------------------------------------
%
%
%
%In this paper, we consider the patterns of synchronization that emerge in networks where individual nodes may have different intrinsic nonlinear dynamics. We explore a necessary condition for cluster synchronization, the cluster-input-equivalence condition, {\bf Needs to elaborated: and show what occurs when it is not satisfied}. Our main contribution is an improvement on previous sufficient conditions for cluster synchronization in networks of oscillators with heterogeneous intrinsic dynamics \cite{davison_sync_2016} using an approach based on contraction theory \cite{Aminzare_thesis}. Our result provides a tighter bound on coupling strength that leads to synchronization phenomena, and a characterization of cluster synchronization in terms of the within-cluster network structure and the across-cluster network structure. {\bf We show, for example, how clusters can split as a result of heterogeneity in dynamics and inputs ...} This work may be compared or combined with conditions on the coupling that are derived using XXXX \cite{medvedev2011synchronization} to better understand emergence of cluster synchronization. Extensions to systems that incorporate other heterogeneities may result in improvements to bounds in those contexts.
%
In this paper, we consider the patterns of synchronization that emerge in networks where individual nodes may have different intrinsic nonlinear dynamics. By adopting an approach based on contraction theory \cite{Aminzare_thesis}, our work proposes a sufficient condition for cluster synchronization, and provides its characterization in terms of the within-cluster network structure and the across-cluster network structure. We also explore a necessary condition for cluster synchronization, namely the cluster-input-equivalence condition, and demonstrate that its violation can lead to collapse of cluster synchronization (Figure~\ref{CIE_Violation_Illustr}). Another key contribution of this work is an improvement on previous sufficient conditions for cluster synchronization \cite{davison_sync_2016} in networks of oscillators with heterogeneous intrinsic dynamics. We also noticed through numerical simulation that heterogeneity within a particular cluster can cause desynchronization in another cluster (Figure~\ref{Single_Agent_Cluster_Sync}). Building upon this observation, our future work will attempt to develop a framework for designing time-varying inputs that will lead to fission and subsequent fusion of clusters.
%
%

%
%-------------------------------------------------------------------------------
%-------------------------------------------------------------------------------
\section*{Acknowledgments}
%-------------------------------------------------------------------------------
%
This work was jointly supported by the National Science Foundation under NSF-CRCNS grant DMS-1430077 and the Office of Naval Research under ONR grant N00014-14-1-0635. This material is also based upon work supported by the National Science Foundation Graduate Research Fellowship under grant DGE-1656466. Any opinion, findings, and conclusions or recommendations expressed in this material are those of the authors and do not necessarily reflect the views of the National Science Foundation.
%
%
%

%
%-------------------------------------------------------------------------------
%-------------------------------------------------------------------------------
%-------------------------------------------------------------------------------
%
\footnotesize
%\bibliography{List_of_Refs}
%\bibliographystyle{abbrv}

%
%
%
\end{document}